\newcommand{\remove}[1]{}
\newcommand{\ignore}[1]{}
\begin{document}

\bibliographystyle{plain}

%\title{On Set Consensus Numbers}

%%%%%%%%%%%%%%%%%%%%%%%%%%%%%%%%%%%%%%%%%%%%%%%%%%%%%%%%%%%%%%%%%%%%%%%%%%%%%%%%
%\title{ Solvability and Topology for Iterated Communication}
\title{\textbf{Snapshot for Time: The One-Shot Case}}
\author{Eli Gafni}
\institute{
Computer Science Department, University of California,
Los Angeles, CA 90024, USA
}
\maketitle

\begin{abstract}

We show that for one-shot problems - problems where a processor executes a single operation-execution - timing constraints can be captured by conditions on the relation between original outputs and supplementary snapshots. In addition to the dictionary definition of the word snapshot, in distributed computing snapshots also stand for a task that imposes relation among sets which are output of processors. Hence, constrains relating the timing between operation-executions of processors can be captured by the sets relation representing a task.

This allows to bring to bear techniques developed for tasks, to one-shot objects. In particular, for the one-shot case
the question of linearizability is moot. Nevertheless, current proof techniques of object implementation require the prover to provide linearization-points even in the one shot case. Transforming the object into a task relieves the prover of an implementation from the burden of finding the ``linearization-points,'' since if the task is solvable, linearization points are guaranteed to exist. We exhibit this advantage with a new algorithm to implement MWMR register in a SWMR system.

\end{abstract}

%\centerline{\bf Regular submission}

%\centerline{{\bf Keywords}: failure detectors,
%computation and synchronization, $k$-concurrency.}

%\thispagestyle{empty}

%\clearpage

%\pagenumbering{arabic}

\newpage

\section{Introduction}

Separation of concerns is the key to tackling a complex job. It applies as well as an essential strategy of attacking an involved research problem. Case in point is the question of the understanding of linearizable objects \cite{HW,LA}. This question has many moving elements:
\begin{enumerate}
\item Solvability: Given a model and an object specification is it implementable, even non-blocking?
\item If implementable non-blocking, is it implementable wait-free?
\item If implementable wait-free is the solution wait-free linearizable in the sense that every operation-execution ($opex$) can be thought of as happened in an instant inside the interval of the $opex$?
\end{enumerate}

While many implementations of linearizable objects have appeared \cite{SWAP,STACK,Common2}, our understanding of how to go about them is lacking. It is attested to by two long-standing apparently unresolved questions: Does there exist a wait-free implementation of a queue from 2-processors consensus \cite{Common2}, and is the so-called Herlihy Hierarchy robust \cite{ROBUST} with respect to deterministic objects, the only partially convincing arguments of \cite{SENSE} and \cite{Neiger}, not withstanding.

We propose to apply separation of concerns to tackle ``objects,'' and in general, timing constraints, by concentrating first on the one-shot case. In the one-shot case, we have finite number of processors and each executes one a priori given ``command'' on a shared object. We will not clutter this paper with $k$-shots generalization, but transforming one-shot to $k$-shot is a standard faire. Even in the one-shot case of an object  the problem is challenging. The implementation of one-shot SWAP, defined later, is a challenging exercise to anybody who has not seen the solution beforehand.

If we do not know one-shot object implementation to be necessarily easy, then, are one shot-objects as hard in general as tasks? The question of solvability of tasks is in general undecidable \cite{GK,HR}.
Are one-shot objects undecidable? As we argue later, one-shot objects are a strict subclass of tasks. Thus, they might be decidable. In fact we conjecture they are! But nevertheless just the fact that such a straight forward question is still on the table shows we haven't started to scratch the surface of understanding objects, all the aforementioned  ingenious implementations in \cite{SWAP,STACK,Common2} not withstanding.

In this short paper we do a promising step in the hope of sheding further light on objects. We show we can transform the question of solvability of one-shot object, into the question of solvability of a task. For tasks we have developed a deep understanding by bringing topology to bear. Hence the hope that this reformulation is a step in the right direction of gaining an understanding.

Tasks are simple mathematical objects formulated as relations. They do not have built in facility to capture the ``happened-before'' relationship introduced by Lamport \cite{opex}, that underly all object specifications. This paper shows the round-about-way of adjoining an object specification with the task of snapshots, to let the object be specified as a task.

The idea is pretty simple. In an implementation of one-shot object a processor executes a single $opex$. The $opex$ starts when the processor takes the first step in the algorithm, and ends when it outputs. In the models we consider, which are restrictions of read-write shared-memory protocol, the $opex$ would start with a processor registering in shared-memory. Similarly to the specification of task solvability, this registration makes the processor participating, and its participation can be detected by read operation by other processor.
At the end of an operation execution, we prolong it in a wait-free manner to have a processor do a collect \cite{Collect} or for more structure and being succinct, an atomic snapshot \cite{ATOMIC}. If processor $p_j$ started its $opex$ after $p_i$ finished its $opex$, then, as a result as prolonging the operation-execution by a snapshot, $p_j$ would not appear in $p_i$'s snapshot. We show that this is sufficient to capture all happen-before timing constraints. 

When we jet pose the question of how to specify a multi-writer multi-reader (MWMR) register object as a task, we obtain the gist of the implementation in \cite{MWMR}. The algorithm is almost ``forced'' by the specification of MWMR as a task. We contend that the same hold for object of 2-processors consensus power, when one does away with the use of 2-processors consensus to achieve gets-and-set, but rather when one uses the cumulative set consensus power of the 2-processors consensus, to narrow the effective number of active processors using set-consensus state-machine and then use wait-free like methods. But that's beyond the scope of this paper hence we settle with the single non-trivial consensus power 1 object we know of.

The one shot tasks and objects have their non-terminating counterparts. In that case we have the question of wait-freedom. Is the solution or the implementation is such that no starvation occurs even though the system progresses. In the case of objects we have a question beyond wait-free implementation and that is the question of linearization. An evolving solution dictates a partial order between $opex$'s. A possible total order of the partial order determines the viability of the partial solution. The question is whether one can commit growing and growing single total order of a prefix of the solution as the solution produces more and more outputs. This is the question called ``linearizability'' \cite{HW, LA}. This question is moot in the one-shot case, nevertheless, the techniques of implementations in \cite{HW,LA} force us to deal with ``linearization-points'' even in the one-shot case. As we show, the reformulation of one-shot object as a task relieves us off worrying about ``linearization-points.'' The solvability of resultant task proved correct in any technique implies that the linearization points exist. We exhibit this point with a new implementation and proof of correctness of a one shot MWMR implementation.

This idea of use of a snapshot is probably hidden also in all the implementations in \cite{SWAP,STACK,Common2,MWMR}, but was never isolated and pinpointed as a module - a module that transforms an object specification into a task. This is not unlike the Atomic-Snapshot formulated in \cite{ATOMIC,composite}. Numerous examples exist (\cite{Ambiguous} as classic example we know of) of researchers ``doing snapshots implicitly'' inside their algorithms prior to \cite{ATOMIC,composite} without taking stock of what they have done, and isolating it as a module.

\section{Preliminaries}
\subsection{Tasks}
A task on $n+1$ processors $P=\{p_0,\ldots,p_n\}$, is a mathematical triple $(I,O,\Delta)$ independent of any model.
$I$ and $O$, called input and output tuples, respectively, are sets of pairs, each pair is of the for $(processor-id, ~value)$
where $value$ comes from some set of values. The sets in $I$ and $O$ range over the whole $ 2^{n+1}$ subsets of the processors. All the pairs in a set contain distinct $processor-id$.

$\Delta$ is a binary relation between $I$ and $O$ with the only constraint that it associates sets from $I$ and $O$ that range over the same set of $processor-id$.

To make things concrete we describe the adaptive renaming task \cite{RENAMING}. The set $I$ is all the distinct $2^{n+1}$ sets each
entry of whom is of the form $(procesor-id, \bot)$. This means that the input to a processor is just its $processor-id$. Such a task is called $inputless$-task, as the $processor-id$ is always understood as part of the input.

For the input tuple $\{ p_i \in E ~|~(p_i,\bot)\}$ for some $E \subseteq P$ and $|E|=k$, $\Delta$ associates any set
of the form $\{p_i \in E~|~(p_i,integer_i)\}$ where $1 \leq integer_i \leq 2k-1$ and $\forall i,j,~ integer_i \not = integer_j$.

In the input-less SWAP task, one processor returns $\bot$ and the rest return each a processor-id from $E$. When we draw a directed graph whose nodes are processor-ids and $\bot$ and draw a directed edge from a processor-id to the value it returns we obtain a simple directed path.

\subsection{Models}

In this paper we restrict the notion of a model to be any subset of infinite runs of the SWMR asynchronous wait-free model \cite{opex}.
It was established in \cite{ramifications} that for any set-consensus object there exits a set of runs that has precisely the power of the object. Thus, a reader who likes wait-free model with objects may still retain her/his mode of thinking.

We chose this restriction since for these models the notion of ``participation'' of a processor is well defined. W.l.o.g. the first operation of a processor in a run is to write its input to its dedicated cell. One it does this it is called \emph{participating} in the run. A processor that reads and writes infinitely many times in a run in the model is called
\emph{live}.

Examples: The $t$-resilient model is the models of all runs in which at least $(n+1)-t$ processors read and write infinitely often. The model that has the exact power of 2-processors consensus is the model of all runs such that when projected on any live $p_i,p_j$ the interleaving of the two processors breaks symmetry infinitely often.

\subsection{Solving a Task in a Model}

A task $T$ is solvable in $M$ if there exits an algorithm $A$ in $M$, and a partial map from local states of processors in runs of $A$ to an output such that:
\begin{enumerate}
\item For any live $p_i$ with input $i_i$ in a run $r \in M$ there is a local state generated in $A$ that maps to an output $o_i$.
\item For the participating set $P_r$ of $r$, corresponding to the input-tuple $\{ p_j \in P_r~|~ (p_j,i_j\}$ there exists an output-tuple which agree on the outputs of the live processors.
\end{enumerate}

\subsection{One-shot Objects}

A one shot-object over the set of processors $P=\{p_0,\ldots,p_n\}$ is a state-machine over a set of $commands$ $C$ that specifies a state-transition function and response for any given $command$ at a given state.  When the function is non-deterministic the object is non-deterministic, otherwise, it is deterministic.

An example of an object is a MWMR register. It accepts commands of the form $read$ or $write(v)$, where $v$ is a value. Its state-transition for a $read$ is to remain at the same state. Its response it the value it inductively holds. Its state transition to $write(v)$ is to change its state to holding $v$, and respond with $ok$.

A queue accepts commands $enqueue(x)$, where $x$ is an abstract $item$, and the command $dequeue$, inductively the queue holds a list of abstract value. It state-transition for $enqueue(x)$ is to append $x$ to the tail of the list, and its response is $ok$. Its state-transition as a result of $dequeue$ is to remove an item from the head-list if the list is non-empty, else remain in the same state. Its response is the item if it exists or $\bot$ it it does not.

The object SWAP is like the task SWAP only that if $op_i$ occurred before $op_j$ the edge $op_i$ is closer to $\bot$ than the edge corresponding to $op_j$.

\subsection{Implementing an Object in a model $M$}

To implement a one-shot object in $M$ is to have an algorithm for all processors initialized with commands and the object initialized to some initial state, and a partial map from views in $A$ to outputs, such that:
\begin{enumerate}
\item For any live $p_i$ with input $i_i$ in a run $r \in M$ there is a local state generated in $A$ that maps to an output $o_i$.
\item An $op$ of a processor in $r \in M$ starts when the processor starts participating until it outputs. The $op$ continues ad infinity if no view maps to output. Then there exist a linearization of the happened before relationship, such that the linear behavior of the object agrees with the outputs of the live processors.
\end{enumerate}
Thus, the only difference in the specification of solving a task or an object is that the solution of an object cares about the relative order of $op$'s.

\subsection{Tasks cannot be formulated as Objects}

This paper will proceed to establish that one-shot object is a special case of a task. Can any task be posed as an object possibly non-deterministic? The answer is negative. Consider the task of adaptive renaming. In solving adaptive renaming there may be no processor that outputs the integer 1. That is because a task may look at all the invocation ``until now'' and respond accordingly. An object cannot ``see'' the future. If it will not give out the value 1 in all runs, it will mean it predicted the future that the run will not be a solo run. It is thus in our opinion meaningless to be called an ``object,'' in such a case.

\subsection{The Atomic-Snapshots (AS) Task}

If the AS task is executed by operation execution if we order the snapshots, what is the condition on the order of snapshots that makes this order a total order of the partial order of the operation execution? The only real timing constraint on the snapshot task is that if $p_i$ finished its operation-execution before $p_j$ started it operation execution then the output set of $p_i$, $S_i$ is such that $p_j \not \in S_i$. This is equivalent to the following:

For a sequence of processors $p_i$ with their output sets $S_i$ we say that the sequence satisfies the \emph{well-ordering} if any position $k$ in the sequence, the intersection of $S_j$ $j \geq k$ is greater equal than union $\{p_m\}$ $m<k$.
\begin{proposition}
For every sequence of processors and their snapshots that satisfies well-ordering there exist a $opex$'s of the wait-free Atomic-Snapshots in \cite{ATOMIC} whose happened-before relation satisfies the linear order of the sequence. Conversely, for a sequence that does not satisfy well-ordering, no algorithm can have a partial order on $opex$'s that is in compliance with the total order imposed by the sequence.
\end{proposition}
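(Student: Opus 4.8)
The plan is to first rewrite the well-ordering condition in the pointwise form that is actually used, then dispatch the (short) necessity direction, and finally construct the execution for the sufficiency direction, which is where the real work lies. Write the sequence as $r_1,\dots,r_N$, with $r_k$'s output snapshot denoted $S_k$. First I would observe that the stated condition $\bigcap_{j\ge k}S_j\supseteq\bigcup_{m<k}\{r_m\}$ for every $k$ is equivalent to $\{r_1,\dots,r_{k-1}\}\subseteq S_k$ for every $k$: the forward direction is the $j=k$ term of the intersection, and the converse follows since $S_j\supseteq\{r_1,\dots,r_{j-1}\}\supseteq\{r_1,\dots,r_{k-1}\}$ for all $j\ge k$. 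Thus well-ordering says exactly that each \emph{opex} sees in its snapshot every \emph{opex} that precedes it in the sequence. I would then read the snapshots as a certificate of order, using two properties of the atomic-snapshot object of \cite{ATOMIC}: the timing constraint quoted above ($r_i$ finishing before $r_j$ starts forces $r_j\notin S_i$), and its companion, that a write completing before a scan begins is read by that scan. Together with the observation that two scans cannot each miss the other's write (else the write/scan events would cycle in the linearization), these show that in any atomic-snapshot run realizing the outputs $(S_k)$ the relation $r_i\prec r_j:\Leftrightarrow r_j\notin S_i$ is a strict partial order that the run's scan-linearization must refine; it is the order the snapshots attest to.

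For the necessity (``conversely'') direction I would argue by contraposition through this linearization. If well-ordering fails there is a position $k$ and an index $j\ge k$ for which $\bigcap_{j'\ge k}S_{j'}$ omits some $r_m$ with $m<k$; hence $r_m\notin S_j$ with $m<j$, so $r_j\prec r_m$. Concretely, $r_m\notin S_j$ means that in any run producing these outputs the scan of $r_j$ is linearized before the write of $r_m$, so $r_j$'s \emph{opex} is ordered strictly before $r_m$'s. But the sequence places $r_m$ before $r_j$, so the sequence is not a linear extension of the order the snapshots force, and no algorithm realizing $(S_k)$ can have its \emph{opex} order comply with the sequence. This is the second claim of the proposition.

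For the sufficiency direction, assuming well-ordering, the pointwise form gives $r_i\in S_j$ whenever $i<j$, so $r_i\prec r_j\Rightarrow i<j$; that is, $\prec$ is contained in the linear order of the sequence, which is therefore one of its linear extensions. The plan is to promote this to an actual wait-free run of the snapshot algorithm of \cite{ATOMIC}: perform the scans in the sequence order $\Sigma_1<\dots<\Sigma_N$, and issue each write $W_t$ just early enough to be caught by exactly the scans $\Sigma_k$ with $r_t\in S_k$; well-ordering guarantees every earlier processor's write is caught, and then I would invoke linearizability of \cite{ATOMIC} to realize this abstract event order by a genuine execution whose happened-before refines into the sequence. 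The hard part will be this construction: one must produce \emph{exactly} the sets $S_k$ — seeing every predecessor (guaranteed by well-ordering) while admitting the later processors that appear in $S_k$ only by issuing their writes early enough to precede $\Sigma_k$ — and check that the resulting schedule is a legal, wait-free run of the algorithm rather than merely a consistent event order. I expect the delicate point to be reconciling these forward-looking members of each $S_k$ with the requirement that the sequence remain a linear extension of the order, which is precisely what the partial order $\prec$, and hence well-ordering, is there to guarantee.
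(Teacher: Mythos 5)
Your pointwise reformulation of well-ordering ($\{r_1,\ldots,r_{k-1}\}\subseteq S_k$ for all $k$) and your necessity argument are correct, and they supply detail the paper itself omits (its entire proof is the remark that the claim ``follows from the definition of a snapshot''). The genuine flaw is in your sufficiency construction. You fix the scan order to be the sequence order $\Sigma_1<\cdots<\Sigma_N$ and then want to issue each write so that it is caught by exactly the scans $\Sigma_k$ with $r_t\in S_k$. But with scans in that order, the set of scans catching a given write is forced to be a suffix $\{a,a+1,\ldots,N\}$ of the sequence; equivalently, your schedule can produce the prescribed outputs only when $S_1\subseteq S_2\subseteq\cdots\subseteq S_N$. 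Well-ordering does not imply this nesting. Concretely, take $\sigma=(r_1,r_2,r_3)$ with $S_1=\{r_1,r_2,r_3\}$, $S_2=\{r_1,r_2\}$, $S_3=\{r_1,r_2,r_3\}$: well-ordering holds and these are legitimate snapshot outputs (writing $w_i,s_i$ for the registration and scan of $r_i$, they are realized by the run $w_1,w_2,s_2,w_3,s_1,s_3$, whose happened-before relation $\{(r_2,r_3)\}$ is indeed extended by $\sigma$), yet $\{k : r_3\in S_k\}=\{1,3\}$ is not a suffix, so no schedule with scans in $\sigma$-order produces these sets. The ``delicate point'' you flag at the end is thus not a technicality left to check; it is exactly where your construction breaks, and no appeal to linearizability of \cite{ATOMIC} can rescue an event order that is self-contradictory.

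The repair shows the detour was unnecessary, and that you were missing one property of snapshots that you never invoke: views are totally ordered by containment. That property is what guarantees that \emph{some} run produces the given outputs (order the scans by inclusion of views, place each write just before the first scan whose view contains it); it is the only place existence needs an argument. Then take \emph{any} run producing these outputs: its happened-before $H$ satisfies $H\subseteq{\prec}$, since $opex_i$ ending before $opex_j$ starts forces $r_j\notin S_i$, and you already proved ${\prec}\subseteq\sigma$ from well-ordering. Hence $H\subseteq\sigma$, i.e., $\sigma$ linearly extends the happened-before of that run, which is the sufficiency claim. In short, the partial order $\prec$ you introduced does all the work by itself; no bespoke schedule in $\sigma$-order is needed, and in general none exists.
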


The proof follows from the definition of a snapshot. The point of the proposition is to say that if someone poses the snapshot problem through timing constraints then what the snapshot task formulation and implementation in \cite{ATOMIC} did ``without knowing,'' is take these timing constrains and pose them as a task. 

\section{Ordered-Tasks}
An ordered task is a relation between sequences of processors to sequences of outputs:
For a sequence of length $l$ of distinct processors it a assigns a set of sequences of length $l$ of outputs.
For instance we exhibit what we call the ordered-task of adaptive renaming: For a sequence of $l$ of distinct processors it assigns all possible sequences of $l$ distinct integers from $\{1,\ldots,2l-1\}$ such that the sequence is increasing.

The adaptive renaming ordered-task wants to capture the additional timing constraint on the task of adaptive-renaming \cite{RENAMING} that says that a processor $p_i$ whose $opex_i$ is after $opex_j$ by processor $p_j$, should output an integer larger than the one output by $p_j$.

An ordered task is solvable in $M$ if there exist an algorithm in $M$ whose operation execution can be made into a total order of the processors such that their outputs are increasing, i.e. the partial order of $opex$'s can be made into a sequence of processors executing the $opex$'s (input sequence) and their outputs ordered into a sequence is an output sequence.

Does the order-task of adaptive-renaming solvable wait-free? Indeed, the wait-free adaptive renaming in \cite{BG93} as stated solves the ordered-task adaptive renaming. We state without elaborating on the proof that follows from \cite{BG93} and \cite{Raynal}, that the minimal space adaptive-renaming solvable task given $k$-test-and-set, is the same space that solves the ordered-task version.

\subsection{Reducing Ordered-Tasks to Tasks}

\begin{proposition}
The ordered task of adaptive renaming is solvable in $M$ iff the adaptive renaming task with each processor outputting a snapshot too, is solvable such
\begin{enumerate}
\item The integers output are in the range prescribed by the adaptive renaming, and
\item If the snapshots of the output pairs $(p_i,(integer_i,snapshot_i))$ are ordered by $integer_i$, then the snapshots are well-ordered.
\end{enumerate}
\end{proposition}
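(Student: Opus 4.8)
The plan is to prove both implications by leaning on the Atomic-Snapshots proposition of the previous subsection, which equates the \emph{well-ordering} of a sequence of snapshots with the existence of a partial order on $opex$'s that the sequence linearizes, together with the defining property of a snapshot: if $p_i$ completes its $opex$ before $p_j$ begins, then $p_j \notin S_i$. Both directions are then essentially applications of this proposition, read in its two halves.

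First I would treat the ($\Rightarrow$) direction. Given an algorithm $A$ solving the ordered task of adaptive renaming in $M$, I prolong each processor's $opex$ in a wait-free manner by an atomic snapshot from \cite{ATOMIC}, so that each $p_i$ now also outputs a set $S_i$. Prolonging only lengthens $opex$'s, hence the happened-before relation of the augmented runs is a sub-relation of the original one; consequently the linearization witnessing the ordered-task solution still extends it, and since the renaming integers are distinct and increasing along that linearization, ordering the outputs by $integer_i$ reproduces it exactly. Condition (1) is then immediate. For condition (2), note that this linear order is in compliance with the (augmented) happened-before partial order; were the integer-ordered sequence of snapshots \emph{not} well-ordered, the converse half of the Atomic-Snapshots proposition would forbid any partial order in compliance with it, a contradiction. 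Hence the snapshots are well-ordered and $A$ extended this way solves the augmented task.

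For the ($\Leftarrow$) direction I would start from an algorithm $B$ solving the augmented task, whose snapshot components are genuine atomic snapshots and therefore obey the snapshot property above. Fix a run and let $p_{\sigma(1)},p_{\sigma(2)},\ldots$ be the live processors listed in increasing integer order. Well-ordering (condition (2)) gives, for every $a<b$, that $p_{\sigma(a)} \in S_{\sigma(b)}$. Now suppose $opex_i$ happened-before $opex_j$; by the snapshot property $p_j \notin S_i$, so $j$ cannot precede $i$ in the integer order (otherwise well-ordering would force $p_j \in S_i$), giving $integer_i < integer_j$. Thus the integer order is a total order extending happened-before along which the integers increase, i.e.\ exactly a linearization witnessing a solution of the ordered renaming task, with outputs in the prescribed range by condition (1).

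I expect the main obstacle to be the ($\Leftarrow$) direction's reliance on the snapshots being \emph{genuine} --- reflecting the actual happened-before of the run rather than merely satisfying the combinatorial well-ordering constraint on the output tuple. The argument only goes through because the snapshot component is the Atomic-Snapshots task solved by a real implementation, so the snapshot property is available; the delicate point is to combine it with well-ordering to rule out $opex_i \to opex_j$ with $integer_i > integer_j$, and to check that the resulting integer order is genuinely a linearization --- a total order consistent with and extending happened-before --- rather than an arbitrary permutation. Once this consistency is verified the equivalence follows, and the same template specializes the general one-shot reduction advertised in the introduction to the adaptive-renaming case.
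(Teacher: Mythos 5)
The paper states this proposition without any proof; the only proof of this kind that it supplies is for the later, more general proposition on one-shot objects, where the ($\Leftarrow$) direction is argued by backward induction: place the linearization point of the processor that is last in the output order at the end of its $opex$, and use well-ordering plus the fact that a processor whose $opex$ starts after that $opex$ ends cannot appear in its snapshot. Your proof is correct and rests on exactly that same key fact --- well-ordering combined with the snapshot property forces the integer order to extend happened-before --- so it is essentially the paper's approach, merely packaged as a direct verification rather than an induction on linearization points (and your ($\Rightarrow$) direction, via prolonging each $opex$ by a snapshot, is the direction the paper regards as trivial). One refinement on the delicate point you flag: appealing to the snapshot components being ``genuine'' atomic snapshots from \cite{ATOMIC} slightly undersells the reduction, since a solver of a \emph{task} is only obliged to output values satisfying $\Delta$, not to produce them by any particular mechanism. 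The gap is properly closed by noting that a snapshot output may contain only participating processors; hence if $opex_j$ begins after $p_i$ has output, $p_i$'s output is identical to its output in the indistinguishable run in which $p_j$ never participates, and so $p_j \notin S_i$. This indistinguishability argument is exactly what the paper's own proof of the general proposition uses silently, so your proof is no less rigorous than the paper's --- and more candid about where the subtlety lies.
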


Interestingly, we remark in passing, the question of linearizability can now apply to ordered-tasks in general.
Linearizability was raised in the context of objects, where thinking about transition happening in an instance make sense. But once we translate it to task we see it is not really about the ``instance.'' It is about committing to a total-order linearization of growing and growing prefix of the partial order solution. A generalization that should occupy a paper by itself.

Thus, if the non-terminating ordered task of adaptive renaming is wait-free solvable, then it is linearizably wait-free solvable, since in this ordered-task the input sequence is a function of the output values. 

\subsection{Objects as Ordered-Tasks}

It is easy to see that a one-shot object is an ordered task, where the specification of the viability of an output tuple-to an input-tuple is whether the application of the sequence of commands as specified by the input will produce the sequence of responses specified by the state machine. Thus it is also a map from sequences to sequences, hence an ordered task.

Thus we have the pinnacle Proposition of this paper:
\begin{proposition}
A one-shot object O is implementable in $M$, if and only if the following task $TO$ is solvable in $M$:
\begin{enumerate}
\item The set $I$ for $TO$ is the set of pairs $(p_i,command_i)$ from $O$. Thus $TO$ is actually input-less, since there is single $command$ associated with each processor $p_i$.
\item The output sets $O$ of $TO$ hold entrees of the form $(p_i,(value_i,snapshot_i))$.
\item $\Delta$: An output tuple on a set $E$ is valid output iff there exists an ordering of the $p_i \in E$, such that the induced ordered $value$'s is the sequence of outputs $O$, and this sequence is well-ordered with respect to the snapshots.
\end{enumerate}
\end{proposition}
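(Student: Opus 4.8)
The plan is to prove the two directions separately, in both cases combining the \emph{prolongation-by-snapshot} device from the Introduction with the characterization established by the Atomic-Snapshots Proposition: well-ordering captures exactly the one real-time constraint a snapshot can witness, namely that if $p_i$ finishes its operation-execution before $p_j$ starts, then $p_j \notin S_i$. Since by the Objects-as-Ordered-Tasks observation $O$ is already an ordered task, the content of the proposition is that adjoining a snapshot to every output faithfully translates the happened-before ordering constraint of the object into the purely combinatorial $\Delta$-condition of $TO$. This makes the present statement the general object version of the adaptive-renaming reduction Proposition, with the renaming output constraint replaced by the object's output-sequence constraint.

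\textbf{Forward direction} ($O$ implementable $\Rightarrow$ $TO$ solvable). Given an algorithm $A$ implementing $O$ in $M$, I would build an algorithm $A'$ for $TO$ in which each processor first runs $A$ to obtain its object response $value_i$ and then \emph{prolongs} its operation-execution by a wait-free atomic snapshot to obtain $snapshot_i$, finally outputting the pair. I then claim the linearization order guaranteed by correctness of $A$ witnesses validity of the produced output tuple. Condition (a), that the values taken in linearization order form a legal object output sequence, is exactly the correctness of $A$. For condition (b), well-ordering, I argue by contrapositive: if $p_m \notin S_j$ then $p_m$ had not registered when $p_j$ took its snapshot, and since that snapshot follows $p_j$'s object output, $op_j$ finished before $op_m$ started; linearizability then forces $op_j$ before $op_m$, i.e. $p_j$ precedes $p_m$ in the chosen order, which is precisely what well-ordering demands.

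\textbf{Backward direction} ($TO$ solvable $\Rightarrow$ $O$ implementable). Given an algorithm solving $TO$, I would implement $O$ by having each processor emit the component $value_i$ of its $TO$-output as its object response; here the object operation-execution coincides with the $TO$ operation-execution, so $snapshot_i$ has already been computed by the time $op_i$ ends. For a fixed run I take the ordering supplied by $\Delta$ as the linearization. Its agreement with the object semantics is condition (a). Its consistency with real time is the crux: if $op_i$ finished before $op_j$ started, then $p_j$ had not registered when $p_i$ took its (already completed) snapshot, so $p_j \notin S_i$, and the well-ordering of the $\Delta$-order then forbids $p_j$ from preceding $p_i$, forcing $op_i$ before $op_j$. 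As linearizability constrains only pairs in real-time precedence and leaves overlapping operations free to be ordered either way, this exhibits a valid linearization.

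\textbf{Main obstacle.} The delicate point, and where I would spend the most care, is the exact match between well-ordering and linearizability's real-time precedence. A snapshot records \emph{participation}, not completion, so I must verify that taking the snapshot strictly after the object output (forward direction) and identifying the object and $TO$ operation-executions (backward direction) make the event ``$p_j \notin S_i$'' coincide conservatively with ``$op_i$ finished before $op_j$ started,'' and, crucially, that the information the snapshot necessarily discards, the relative order of overlapping operations, is precisely the freedom linearizability grants. Pinning this correspondence down is exactly what licenses the combinatorial $\Delta$-condition to stand in for the existence of linearization points, so that solving $TO$ by any technique automatically produces them.
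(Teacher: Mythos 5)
Your proof is correct and follows essentially the same route as the paper: the heart of both arguments is the single observation that a processor whose $opex$ starts after $op_i$ ends cannot appear in $snapshot_i$, so well-ordering forces the $\Delta$-ordering to be consistent with the happened-before relation. The differences are cosmetic — you spell out the forward (prolongation-by-snapshot) direction, which the paper treats as the trivial one and omits, and you phrase the backward direction as a direct proof that the $\Delta$-order is a linear extension of happened-before, where the paper runs a backward induction placing linearization points at $opex$ ends; these are contrapositive formulations of the same step.
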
 
\begin{proof}
The only non-trivial direction is $(\Leftarrow)$: Given a solution to the task we have to show linearization points. We do it inductively. Let $p_i$ be the last in the order. Its snapshot is $S_i$. Place the linearization at the end of $opex_i$.
All we need to show to continue the induction is that no other $opex$ started after the end of $opex_i$. But since by well ordering $S_i$ contains all processors in $E$, a processor in $E$ that started its $opex$ after $opex_i$ will not appear in $snapshot_i$, contradiction.
\end{proof}

\subsubsection{Example of the Implementation of NWMR register Based on the Snapshot Formulation}

The idea based on the snapshot formulation is for a processor to post a snapshot, and then take another snapshot to finish its $op$. Thus, with an $op$ we have two snapshots. One, which we will call \emph{early-snapshot} is visible to all. The latter, \emph{late-snapshot} is part of the individual output of a processor in a task, and it is the one the correctness is determined by. When the early-snapshot is posted it establishes a ``conservative'' partial order. The final partial order is a sub-order of this partial order. Therefore, working according to the conservative partial order when it is visible, will guarantee correctness. The approach is best manifested by the following two examples:

Implementation of one-shot MWMR register by SWMR wait-free:\\
$Write(v)$ by $p_i$: Post $id$ in shared-memory. Snapshot. Write$(v,early-snapshot)$. Snapshot. Output: $(ok,late-snapshot)$.\\
$Read$ by $p_j$: Post $id$ in shared-memory. Snapshot. Total-order (by some consistent tie breaking rule) the $op's$ whose early-snapshot is posted. Return $v$ from the latest $op$. Snapshot. Output: $(v, late-snapshot)$.\\

\begin{algorithm}[t]

{\small
Shared Array $Id[ 0 \ldots n]$ initialized to $\emptyset$\;
Shared Array  $Value[ 0 \ldots n]$  initialized to $\bot$\;
%Local $IdSeen$ set of processors id, $InId$ input id, both initialized to $\{MyId\}$ and $MyId$, respectively\;
\
\
\\
%Local{$Counter_i := 0$}\;
%        $k := 1$
%
%       $\sigma := $ root of $\Upsilon$
%
%
%
%        $R_i := (1,n)$\;
 %       $Counter_i := 0$\;
 %      $X_i :=[0, \ldots, 0]$
%\tcc{the simulation}
{\bf Write(v):}{\\
{$Id[i]:=id$\;}
{$early-snapshot:= \cup Id$\;}
{$Value[i]:=(v,early-snapshot)$\;}
{$late-snapshot:=\cup Id$\;}
{{\bf Return}$(ok,late-snapshot)$\;}
}
\
\
\\

{\bf Read:}{\\
{$Id[i]:=Id$\;}
%{$early-snapshot:= \cup Id$\;}
{$snapshot:=Value$\;}
{local:$~~j:=\{p | Value[p] \not = \bot, \forall q:Value[q] \not = \bot (|Value[q].early-snapshot| < |Value[p].early-snapshot|  \vee ( |Value[q].early-snapshot| = |Value[q].early-snapshot| \wedge q <p))$\;}
{$late-snapshot:=\cup Id$\;}
{{\bf Return}$(Value[j].v~,late-snapshot)$\;}
}
    %\For{ $j=1$ to $n$}{
     % {$C[j,1]:=IdSeen$\; }
          % {$Snap:= \cup_l~C1[j,l]$\;}
                          %  \eIf {$|Snap|=j$}{ { $InId:=$ Invoke with $InId$ the object according to $MyId$ rank in $Snap$\;}
               % {$C2[j,i]:=InId$\;}
              %  {$IdSeen=\cup_l~C1[j,l]$\;}
             %  {\bf{If} $|IdSeen|=j$ then return $MyId$\;}}
             %  {\bf{If} $\cup_l~C2[j,l] \not = \emptyset$ then $InId:=$ element of $\cup_l~C2[j,l]$\;}

 %               \eIf $|IdSeen|=j$ { {Return $MyId$\;}
  %            {\eIf{$|IdSeen|=j$}{ Return $InId\;}
    %             { $R_i :=  R_i \cdot (blocked,r,\ell)$\;}}

              %\If {the current round of $p$ is maximal}
              %{$X_i[p] := Counter_{p}$\;}

       % }
}

\caption{ MWMR One-Shot Implementation}
\label{alg:lin}
\end{algorithm}

%\item Implementation of one-shot SWAP by SWMR with TST (test-and-set) wait-free:\\
\begin{proposition}
Algorithm 1 solves the one-shot MWMR timed-task.
\end{proposition}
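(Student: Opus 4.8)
The plan is to produce, for every run of Algorithm~1, a single total order on the participating operations that at once (i) induces a legal MWMR value-sequence and (ii) is well-ordered with respect to the late-snapshots the operations output. By the preceding (pinnacle) Proposition characterizing object implementation as solvability of the task $TO$, exhibiting such an order is exactly what it means to solve the timed-task; in particular it is what supplies the linearization points, so I never have to name them. Wait-freedom is immediate and I would dispatch it first: each operation performs a bounded number of register writes and atomic snapshots, and the snapshots of~\cite{ATOMIC} are wait-free, so every live processor outputs.

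For the order I would use the \emph{conservative} early-snapshot order advocated in the text. All early- and late-snapshots are taken on the same array $Id$, so they form a chain under containment, every processor lies in its own snapshots since it posts its id first, and $E\subseteq L$ for each write. Order the writes by the cardinality of their early-snapshots, breaking ties by id --- because the snapshots are comparable this coincides with containment and with the very rule the read uses to choose the latest visible write. Then insert each read immediately after the write it selected (reads that see no write are placed at the front, where they will return $\bot$), ordering any reads that fall into the same gap by late-snapshot containment.

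Legality of the value-sequence is then nearly forced by the construction: a read sitting just after its selected write $W^*$ has $W^*$ as its unique most-recent predecessor among writes and returns $Value[W^*].v$, while writes return $ok$. The only point to argue is that no write the read failed to see can slip in before it --- but any unseen write has strictly larger early-snapshot cardinality (or equal cardinality and larger id), hence lands in a later gap; and a front read correctly reflects the empty initial state.

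The substantive work, and the step I expect to be the main obstacle, is well-ordering. It unwinds to the membership claim ``$X$ precedes $Y$ implies $p_X\in L_Y$,'' i.e.\ $X$ posted its id before $Y$ took its late-snapshot, and I would verify it case by case. Two writes are free, since $p_{W_a}\in E_a\subseteq E_b\subseteq L_b$; a write-before-read and a read-before-write reduce to short timeline chases using $E\subseteq L$ and the fact that a read failing to see a write must have snapshotted $Value$ before that write stored its value. The delicate case is two reads in different gaps, where I must exclude that a read selecting an earlier write actually started after a read selecting a later write had finished. The lever is monotonicity of the write-once $Value$ array together with the selection rule: if the earlier-selecting read had started that late, its $Value$-snapshot would contain a superset of the later-selecting read's, forcing it to select a write at least as late, a contradiction. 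This is precisely where the early-snapshot ordering that drives value-correctness must be reconciled with the late-snapshot membership condition that drives well-ordering, so it is the part I would write most carefully.
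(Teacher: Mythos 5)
Your proposal is correct and is essentially the paper's own proof in different clothing: the paper's backward induction---repeatedly peeling off the write with the largest early-snapshot (ties broken by id) together with the reads returning it, those reads ordered by their late-snapshots---constructs exactly your total order, and its membership arguments are your case analysis (early-snapshot containment for write/write pairs, and the timeline chase ``a read that missed a write must have snapshotted $Value$ before that write posted its entry'' for the cases involving reads, of which your monotonicity contradiction for two reads is just the contrapositive). One aside of yours is false but harmless: an unseen write need \emph{not} have a larger early-snapshot and land in a later gap (a slow writer can take a small early-snapshot and write to $Value$ only after the read's snapshot, landing in an \emph{earlier} gap), but legality is unaffected because your construction already places each read after its selected write and before the next write in the order, so no write can sit between them.
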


\begin{proof}
We prove by backward induction that we can totally order the output values in compliance with the well ordering of the returned snapshots.

W.l.o.g a Writer processor $q$ writes $q$
Let processor $p$ have Write $op$  that posted the largest $early-snapshot$ and has the highest id among the Writers that posted the largest early-snapshot. Let $rest$ be set of all processors but those Reader that returned $p$, and $p$ itself.
The backward ordering is linear ordering of $\{Readers-returning-p\},p,rest$.

We first show that $rest \subseteq p.late-snapshot$: By the algorithm, a Reader $p_j \in rest$ did not $snapshot$ the $early-snapshot$ of $p$.  Since $p_j.snapshot$ is after $p_j$ posted its id, and $p.late-snapshot$ is after it $wrote$ its $early-snapshot$ we conclude that $p_j \in p.late-snapshot$.

To see that same holds for a Writer $p_j \in rest$ notice that by definition $p_j$ has an $p_j.early-snapshot \leq p.early-snapshot$.
By the property of snapshot and the fact that $early-snapshot \subseteq late-snapshot$ we get $p_j \in p.late-snapshot$.

Since $q \in \{Readers-returning-p\}$ has $snapshot$ $p.early-snapshot$ its $snapshot$ occurred after $p$ $wrote$ $p.early-snapshot$ and therefore we get $p_j \in q.late-snapshot$.

Now order $\{Readers-returning-p\}$ by their $late-snapshots$'s.

Continue inductively on $rest$.

\end{proof}

\section{Conclusions}

We reduce the apparent ``entropy'' of distributed computing with its apparently unrelated diverse models, problems, communication mechanisms, by showing that the creature called ``object,'' at least when considered in the one-shot version, is a task!
What about wait-freedom in the infinite-case etc. etc.? \\
Rome was not built in one-day. You build it one stone at a time. Separation of concerns: First understand the finite case.

We even show that beyond the aesthetic satisfaction of reducing entropy, viewing objects as tasks for the one shot case makes implementations and their proof clearer. When we compare the MWMR implementation of Algorithm1 with the implementation and proof in \cite{MWMR}, we do see a value.

%\newpage
%\appendix
%\input{appendix}

\end{document}